\def\endthebibliography{%
	\def\@noitemerr{\@latex@warning{Empty `thebibliography' environment}}%
	\endlist
}
\newcommand{\bY}{\mathbf{Y}}
\newcommand{\bB}{\boldsymbol{B}}
\newcommand{\bI}{\mathbf{I}}
\newcommand{\bh}{\mathbf{h}}
\newcommand{\bu}{\mathbf{u}}
\newcommand{\bF}{\mathbf{F}}
\newcommand{\ba}{\boldsymbol{a}}
\newcommand{\bs}{\boldsymbol{s}}
\newcommand{\by}{\mathbf{y}}
\newcommand{\bN}{\mathbf{N}}
\newcommand{\bn}{\mathbf{n}}
\newcommand{\E}{\mathbb{E}}
\newcommand{\Ncal}{\mathcal{N}}
\newcommand{\Tcal}{\mathcal{T}}
\newcommand{\T}{\mathtt{T}}
\newcommand{\D}{\mathtt{D}}
\newcommand{\R}{\mathtt{R}}
\newcommand{\N}{\mathtt{N}}
\newcommand{\s}{\mathtt{s}}
\newcommand{\com}{\mathtt{c}}
\newcommand{\p}{\mathtt{p}}
\newcommand{\tot}{\mathtt{tot}}
\newcommand{\bvarpi}{\boldsymbol{\varpi}}
\newcommand{\doublewidetilde}[1]{{%
		\mathpalette\double@widetilde{#1}}}
\newcommand{\double@widetilde}[2]{%
	\sbox\z@{$\m@th#1\widetilde{#2}$}%
	\ht\z@=.5\ht\z@
	\widetilde{\box\z@}}
\newtheorem{lemma}{Lemma}
\begin{document}
\newcommand{\pp}[1]{\textcolor{red}{#1}}
\newcommand{\phuc}[1]{\textcolor{blue}{#1}}
\newcommand{\forest}[1]{\textcolor{green}{#1}}

\title{Digital Twin for Autonomous Guided Vehicles based on  
Integrated Sensing and Communications}

\author{Van-Phuc Bui, Pedro Maia de Sant Ana,  Soheil Gherekhloo, Shashi Raj Pandey, Petar Popovski
\thanks{V.-P Bui, S.R. Pandey, and P. Popovski (emails: \{vpb, srp, petarp\}@es.aau.dk) are all with the Department of Electronic Systems, Aalborg University, Denmark. P. M. de Sant Ana is with Corporate Research and Soheil Gherekhloo is with Cross-Domain Computing Solutions, both at Robert Bosch GmbH, Germany (email: \{Pedro.MaiadeSantAna, soheil.gherekhloo\}@de.bosch.com). This work was supported by the Villum Investigator Grant ``WATER'' from the Velux Foundation, Denmark.}
}

\maketitle
\thispagestyle{empty}
\begin{abstract}
    This paper presents a Digital Twin (DT) framework for the remote control of an Autonomous Guided Vehicle (AGV) within a Network Control System (NCS). The AGV is monitored and controlled using Integrated Sensing and Communications  (ISAC).  In order to meet the real-time requirements, the DT computes the control signals and dynamically allocates resources for sensing and communication. A Reinforcement Learning (RL) algorithm is derived to learn and provide suitable actions while adjusting for the uncertainty in the AGV’s position. We  present closed-form expressions for the achievable communication rate and the Cramer–Rao bound (CRB) to determine the required number of Orthogonal Frequency-Division Multiplexing (OFDM) subcarriers, meeting the needs of both sensing and communication. The proposed algorithm is validated through a millimeter-Wave (mmWave) simulation, demonstrating significant improvements in both control precision and communication efficiency.
\end{abstract}
% \begin{IEEEkeywords}
%     Digital twin, ISAC, Reinforcement Learning, Internet of Things, Dynamic Systems.
% \end{IEEEkeywords}

%%%%%%%%%%%%%%%%%%%%%%%%%%%%%%%%%%%%%%%%%%%%%%%%
\vspace*{-18pt}
\section{Introduction}\label{sec:intro}
\vspace*{-2pt}
%%%%%%%%%%%%%%%%%%%%%%%%%%%%%%%%%%%%%%%%%%%%%%%%

Smart manufacturing in Industry 4.0 requires collection of extensive real-time data from a wide range of wireless sensors \cite{tang2015tracking}. Unlike the traditional simulation tools or optimization techniques, digital twin (DT) models convert these large datasets into predictive models \cite{10574266}. This enables simulation of various control strategies, thereby facilitating real-time interactions and decision-making for system operators~\cite{9899718}.

A network control system (NCS) is used to power up a DT, where the latter can be situated either at the edge, tailored to local physical systems, or in the cloud, covering extensive physical systems. The network comprises sensor devices and/or central/distributed units integral to a 5G system or beyond \cite{dahlman20205g}. The acquired knowledge from the DT model serves twofold purposes: (1) controlling the physical world; and (2) provision of monitoring system states. Traditionally, tracking the physical world relies upon wired/wireless connections for internal communications among sensors, controllers, and actuators. However, this can limit the agility and scalability of system because different sensors provide varying sensing accuracy and associated costs. Additionally, in a complex environment comprising multiple sensors and remote devices operating simultaneously, there is a significant burden on processing various types of signals and managing interference. To overcome these challenges, we advocate the use of Integrated Sensing and Communications (ISAC) \cite{mandelli2023survey}, which leverages compact access points (APs) to enable simultaneous communication and monitoring physical world. 

There has been substantial research on ISAC covering foundational concepts, algorithms, analysis, and demonstrators~\cite{mandelli2023survey, liu2022survey}\cite{hexaxii_project}. It has been shown in~\cite{nguyen2023multiuser} that the communication rate of ISAC systems can be maximized within radar performance constraints. 
%Researchers has also focused on the use of Reinforcement Learning (RL) in DT systems to perform various tasks~\cite{Bui_2024}. 
DT frameworks are also applied in monitoring Automated Guided Vehicles (AGVs) by effectively selecting and scheduling UWB sensors for efficient demonstration in industry \cite{bui2024digital}. However, the aforementioned works \cite{mandelli2023survey, liu2022survey,hexaxii_project, nguyen2023multiuser, bui2024digital} and their references do not consider the reciprocal influence between ISAC and control performance in the physical world based on communication requirements.

Here we examine the DT architecture for tracking dynamic changes in system parameters and efficiently controlling the AGV. We formulate an optimization problem to balance subcarrier allocation between sensing and communication, aiming to maintain the confidence of the DT's system estimates while simultaneously optimizing the AGV control policy. We devise an RL agent that learns to perform actions while managing the uncertainty estimation of the state. A closed-form solution for subcarrier allocation is presented, tailored to meet the required AGV position uncertainty from both the RL agent and the DT. In terms of communication, we calculate the amount of subcarriers required to meet the throughput demand. We introduce an efficient solution to allocate the available subcarriers within a mmWave environment, balancing the needs of sensing and communication. The numerical results to validate our theoretical findings, offering a superior performance of the proposed subcarrier allocation scheme.

\vspace{-10pt}
%%%%%%%%%%%%%%%%%%%%%%%%%%%%%%%%%%%%%%%%%%%%%%%%
\section{Digital Twin Architecture}
%%%%%%%%%%%%%%%%%%%%%%%%%%%%%%%%%%%%%%%%%%%%%%%%
\subsection{System model}
We explore the application of ISAC to support NCSs within a DT framework for a AGV, as illustrated in Fig.~\ref{fig:ISAC_system}. The DT model's objective is to uphold a precise estimate of AGV's state  and offer optimized sequence of actions to be executed in the physical realm based on its beliefs about states. The AGV is monitored và controlled though set of APs deployed at the physical world. These APs synchronize their DTs,
including locations and power budget status, with the Cloud platform with a high reliability and periodic synchronization overhead. Based on the current model state and the update from the AGV, the DT model updates the AGV's active state, control policy, control signal, and schedules OFDM subcarriers for sensing and communication at the APs. The control problem considered here is known as the Mountain-Car Continuous problem \cite{moore1990efficient}, inspired by control systems for autonomously operating AGVs. The link between the AP and the cloud is assumed to be perfect. At the start of each query interval (QI), the scheduled AP uses ISAC within $\Delta_\mathtt{ISAC}$ interval to sense the state and communicate with the AGV. The DT continuously monitors the AGV’s state through APs and dynamically adjusts the control inputs to optimize its path. Our approach does not use no specific techniques for estimating the future state of the AGV, such that the control signals are calculated based on the previous state of the AGV.
This is reasonable when the indoor AGV operates at a low speed and the QI length is short.
The controller generates a command and sends it over a downlink channel. The application output for actuator control retrieves the latest commands from memory and applies them to the AGV. The AGV's state at any QI $t$ is represented by $\bs_t = [x_t, v_t]$, where $x_t/v_t$ denotes the position/velocity, respectively, at QI $t$. The control signal $a_t$ indicates the applied force. The system dynamics is modeled as 
\begin{IEEEeqnarray}{lll}\label{dynamic_model}
    \bs_{t+1} = f(\bs_t) + \bB a_t + \bu_t, \forall t \in \Tcal,
\end{IEEEeqnarray}
where $f(\bs_{t})= \begin{bmatrix}
	x_t + v_{t}	\\ 
	v_t -\varphi \cos(3x_{t})	
\end{bmatrix},
\bB = [0, \vartheta]^\top,$
with the constants $\varphi = 0.0025$ and $\vartheta = 0.0015$ \cite{GymnasiumMountainCarContinuous}.  $\mathbf{u}_t\sim \mathcal{N}(\mathbf{0},\mathbf{C}_{\mathbf{u}})$ indicates the process noise.

\vspace{-10pt}
\subsection{ISAC model}
We consider a mono-static configuration featuring quasi co-located  transmitters (TX) and receivers (RX) with equal characteristics, comprising $R$ rows and $C$ columns, and separated by distances $\Delta r$ and $\Delta c$, respectively \cite{wild2023integrated}. The budget at AP  includes total $N$ subcarriers and $M$ OFDM symbols. The DT model attempts to schedule $n_{\s,t}$ subcarriers for sensing and $n_{\com,t}\leq N-n_{\s,t}$ subcarriers for communication.
\begin{figure}
    \centering
    \includegraphics[width = 0.5\textwidth]{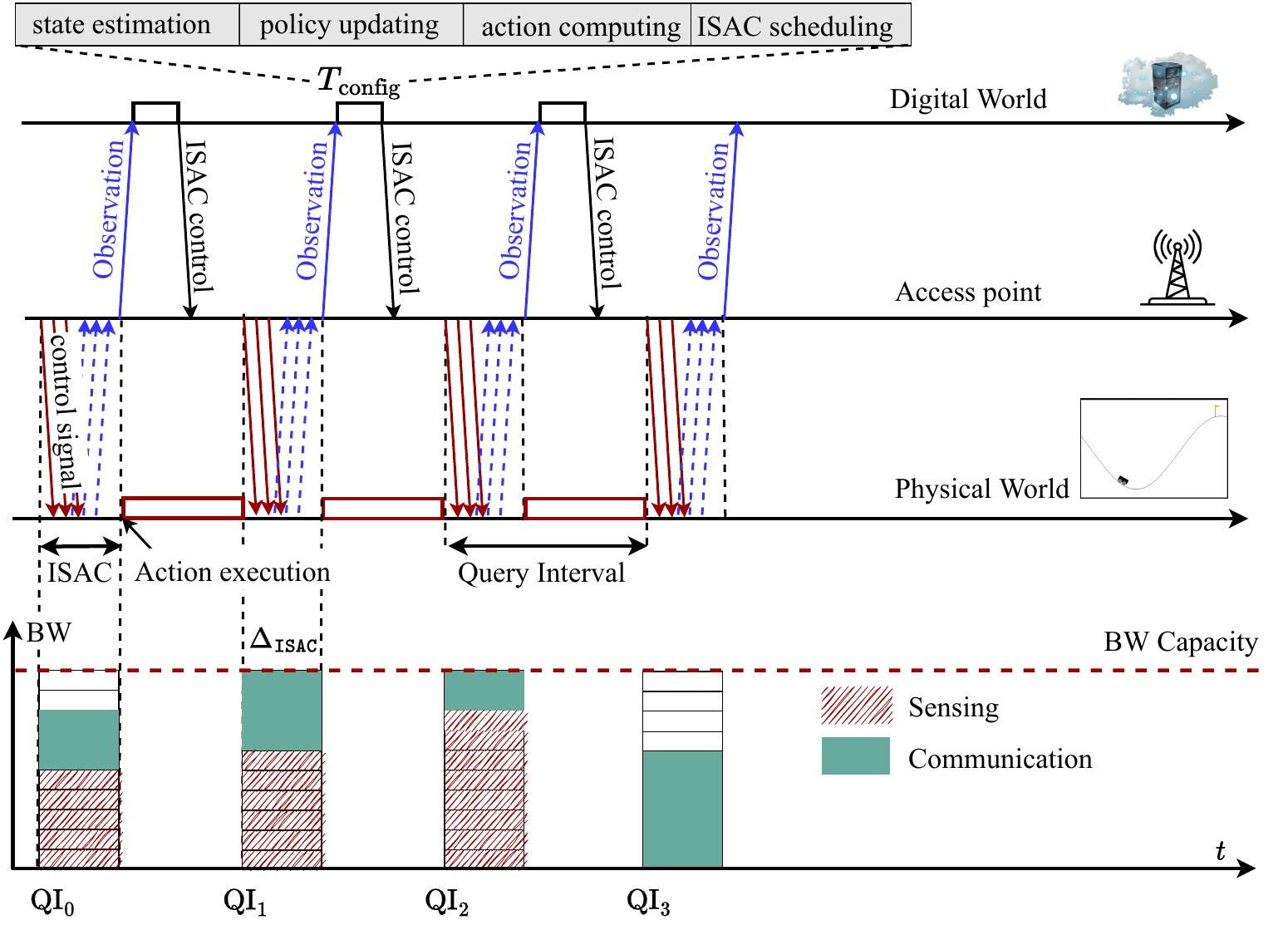}
    \vspace{-15pt}
    \caption{The ISAC-based Digital Twin diagram.}
    \label{fig:ISAC_system}
    \vspace{-15pt}
\end{figure}
% \vspace{-10pt}
\subsubsection{Sensing model}
\begin{figure}
\begin{minipage}{0.24\textwidth}
    %		\centering
    \includegraphics[trim=0.1cm 1cm 1.cm 1cm, clip=true, width=1.75 in]{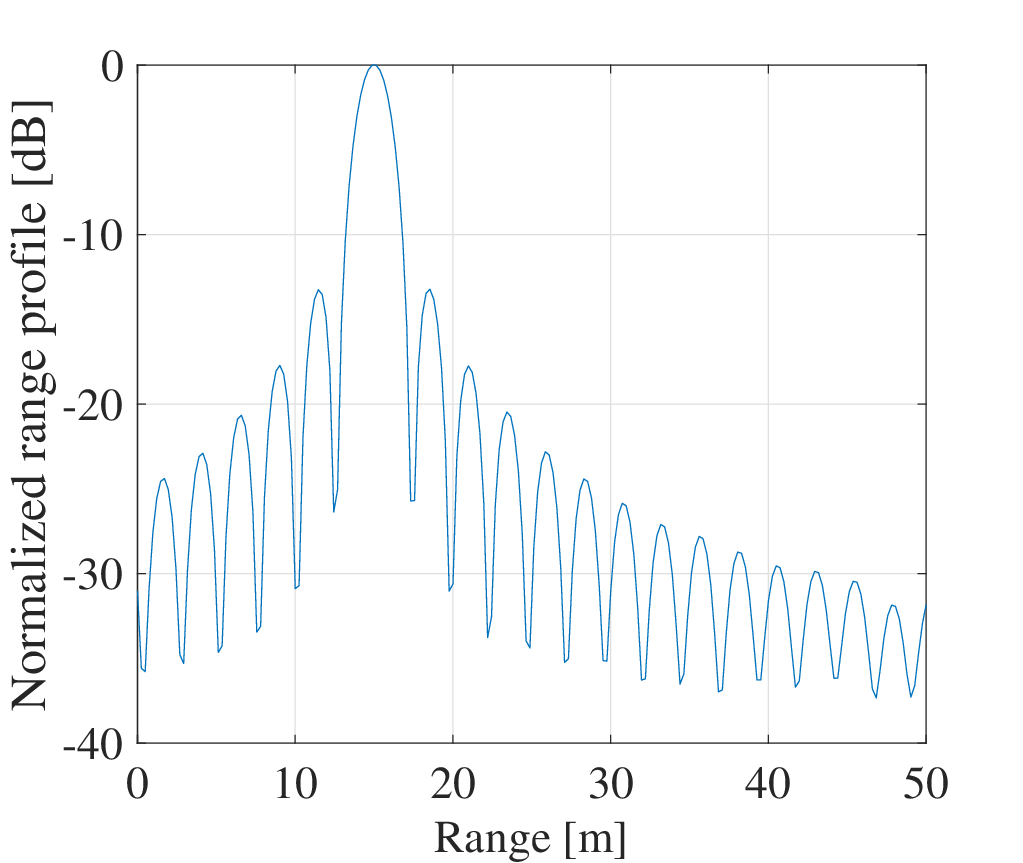} \\ 
    \vspace*{-5pt}
    \centering {\footnotesize$(a)$}
    \vspace*{-3pt}
\end{minipage}
\begin{minipage}{0.24\textwidth}
    %		\centering
    \includegraphics[trim=0cm 0cm 0.cm 0cm, clip=true, width=1.76 in]{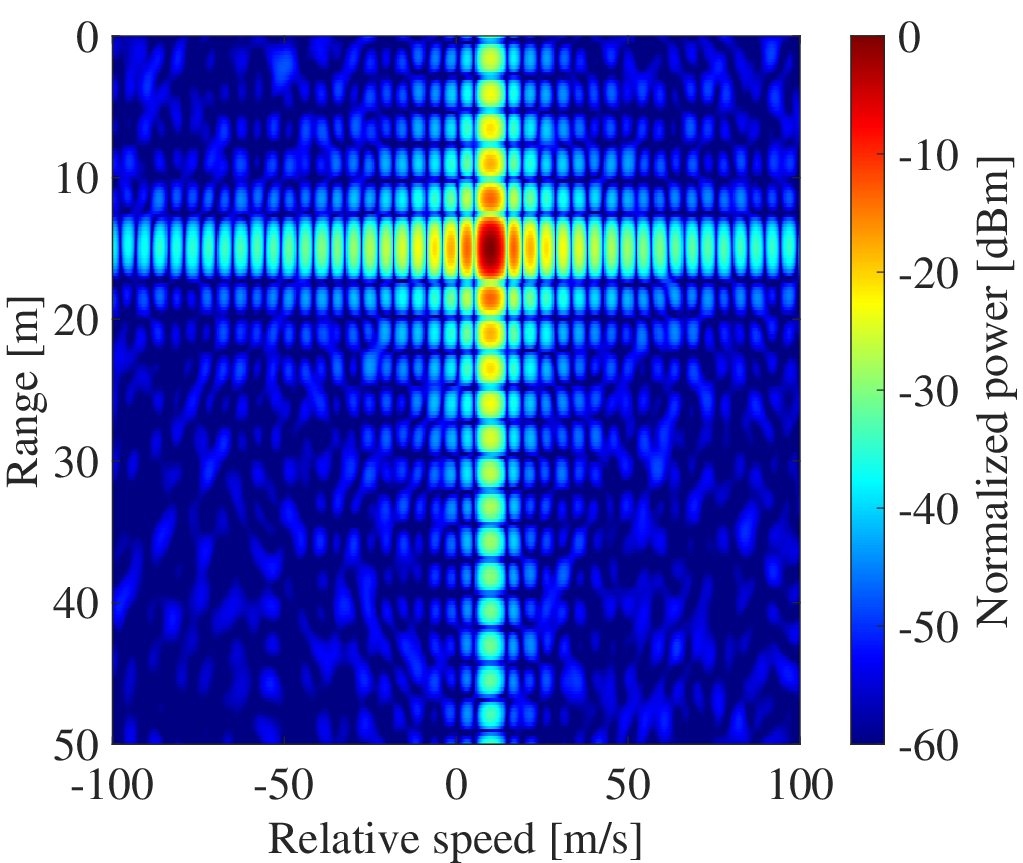} \\ 
    \vspace*{-5pt}
    \centering {\footnotesize$(b)$}
    \vspace*{-3pt}
\end{minipage}
\captionsetup{format=plain, justification=justified, width=1\linewidth}
\caption{\small{Example with one object (mmWave setup, $N = 512, M = 128$): $(a)$ Normalized range profile; and $(b)$ a periodogram.}}
\label{fig_sensing}
\vspace{-20pt}
\end{figure}
The received reflected signal $r_t$  has the form $r_t = b_t \varrho_{t - \tau_t} e^{j 2 \pi f^\D_t,t} e^{j \tilde{\psi}_t} + \tilde{z}_t$, where the transmitted signal is $\varrho_t$, the delay $\tau_t = 2\frac{r_t}{c}$, the Doppler shift $f^\D_t = 2\frac{v_t}{c}f_c$, an addition random phase rotation $\tilde{\psi}_t$, and AWGN $\tilde{z}_t \sim\Ncal(0, \sigma_{\tilde{z}}^2)$. The attenuation, using a point scatter model, is $b_t = \sqrt{\frac{c^2 \Psi}{(4\pi)^3 r_t^4 f_c^2}}$. We denote by $\Psi, r, \lambda, f_c, c$ the radar cross section, range, wavelength,  central frequency, and speed of light, respectively. In LoS sensing, the received power $P^\mathtt{R}_t$ is:
\begin{align}
    P^\mathtt{Rs}_t = \frac{P^\mathtt{T} G^\T}{4\pi r^2_t}\Psi \frac{1}{4 \pi r^2_t}\frac{G^\R \lambda^2}{4\pi} = {P^\T G^\T G^\R} b_t^2, \forall t\in\Tcal,
\label{eq:RxPower}
\end{align}
where  $P^t, G^t, G^r$ are transmitted power, transmit antenna array gain, receive antenna array gain, respectively, by assuming a perfect focus on the target with the transmit beam. The received noise power is $ P^\N_t = (N_0 F) (n_{\s,t} \Delta f)$, 
% \begin{equation}
%     P^\N_t = (N_0 F) (n_{\s,t} \Delta f),
% \end{equation}
where $N_0$ is the noise spectral density, $F$ is noise figure, and $\Delta f$ the subcarrier spacing. The sensing Signal-to-Noise (SNR) ratio is  calculated as \cite{braun2014ofdm}:
\begin{align}
    \gamma_{\s,t} = \frac{P^{\R\s}}{P^\N} n_{\s,t}M = {P^\T G^\T G^\R}  \Psi \frac{c^2M}{(4\pi)^3 r^4_t f_c^2 N_0F\Delta f}.
    \label{eq_SNR}
\end{align}
% \subsubsection{Estimation in Sensing}
Assuming the transmitted OFDM frame is presented by $\bF^\T \in \mathcal{F}^{N\times M}$ where $\mathcal{F}$ is the modulation alphabet. Specially, each row of $\bF^\T$ represents a sub-carrier; each column represents an OFDM symbol of the transmitted frame. Then the element $(n,m)$ of the received frame matrix $\bF^\R_t$  at intervals of length $T_0$ is computed by $[\bF^\R_t]_{(n,m)} = b e^{j 2 \pi m T_0 f^D_t} e^{-j 2 \pi n \tau_t \Delta f} e^{j \tilde{\psi}_t} + [\mathbf{Z}_t]_{(n,m)}$. 
% \begin{equation}
% [\bF^\R_t]_{(n,m)} = b. e^{j 2 \pi m T_O f^D_t} e^{-j 2 \pi n \tau_t \Delta f} e^{j \tilde{\psi}_t} + [\mathbf{Z}_t]_{(n,m)}.
% \end{equation}
Using solution from \cite{5776640}, the periodogram of $\bF_t$ is formulated as $[\mathrm{Per}^{\mathbf{F}}_t]_{(n,m)} = \tfrac{ \left| \sum_{k=0}^{N_{\text{Per}}-1} \left( \sum_{l=0}^{M_{\text{Per}}-1} [\mathbf{F}_t]_{(k,l)} e^{-j 2 \pi \frac{l m}{M_{\text{Per}}}} \right) e^{j 2 \pi \frac{k n}{N_{\text{Per}}}} \right|^2}{NM}.$
% \begin{IEEEeqnarray}{lll}
%     [\mathrm{Per}^{\mathbf{F}}_t]_{(n,m)} = \tfrac{ \left| \sum_{k=0}^{N_{\text{Per}}-1} \left( \sum_{l=0}^{M_{\text{Per}}-1} [\mathbf{F}_t]_{(k,l)} e^{-j 2 \pi \frac{l m}{M_{\text{Per}}}} \right) e^{j 2 \pi \frac{k n}{N_{\text{Per}}}} \right|^2}{NM}. \nonumber
% \end{IEEEeqnarray}
AGV's estimated range and velocity are computed by $\bar{r}_t = ({\mathrm{Pe}^{\hat{n}}_t c})/({2 \Delta f N_{\text{Per}}}) $ and $\bar{v} = ({\mathrm{Pe}^{\hat{m}}_t c})/({2 f_c T_0 M_{\text{Per}}})$.
% \begin{equation}
% \bar{r}_t = \frac{\mathrm{Pe}^{\hat{n}}_t c}{2 \Delta f N_{\text{Per}}} \quad \text{and} \quad \bar{v} = \frac{\mathrm{Pe}^{\hat{m}}_t c}{2 f_c T_0 M_{\text{Per}}}.
% \end{equation}
Herein, $\mathrm{Pe}^{\hat{n}}_t$ and $\mathrm{Pe}^{\hat{m}}_t$ are respectively peak values of  the matrix $\mathrm{Per}^{\mathbf{F}}_t$. $N_\mathrm{Per}$ and $M_\mathrm{Per}$ are the number of IFFT and FFT in transformation, respectively. Fig.~\ref{fig_sensing} plots the normalized range profile and an example of a periodogram to sense the range and speed under mmWave setup.
% In addition, we have the physical azimuth
% and elevation from NAF dimensions as
% \begin{IEEEeqnarray}{lll}
% \bar{\phi}_t = \sin^{-1} \left( \frac{\lambda}{\Delta r} \zeta_t \right), \ 
%     \bar{\theta}_t = \sin^{-1} \left( \frac{\lambda}{\Delta c} \cos(\bar{\phi}_t) \ell_t \right), 
% \end{IEEEeqnarray}
% where $\bar{\phi}_t$ is the mean of sensed azimuth angle at time $t$. 
The Cramér-Rao Bound (CRB) establishes a minimum limit on the variance of unbiased estimators. Within the considered framework, we outline the uncertainty for position and velocity, specifically for the purposes of monitoring and estimating within the DT model. To achieve this, we first derive the CRB for estimating the target's range, velocity, and elevation angle as \cite{mandelli2023survey}
\begin{align}
\sigma_{r,t} &= \frac{c}{4\pi \Delta f} \sqrt{\frac{6}{(n_{\s,t}^2-1)\gamma_t^\s}} , \forall t\in \Tcal,
\label{eq:AccuracyRange} \\
\sigma_{v,t} &= \frac{c}{4\pi f_c T_0} \sqrt{\frac{6}{(M^2-1)\gamma_t^\s}} , \forall t\in \Tcal,
\label{eq:AccuracySpeed} \\
\sigma_{\theta,t} &=  \sin^{-1}\left(\frac{\lambda}{\Delta c}\cos({\phi}_t)(\ell_t\pm \sigma_{\mathtt{fx},t})\right) - {\theta}_t,  \forall t\in \Tcal\label{eq:AccuracyElevation},
% \sigma_x &=  \frac{1}{2\pi } \sqrt{\frac{6}{(C^2-1)\gamma(r)}}  \label{eq:AccuracyAzimuth}.
\end{align}
% where $\sigma_{\mathtt{fx},t} = \frac{1}{2\pi } \sqrt{\frac{6}{(R^2-1)\gamma_t}}$ 
where $\sigma_{\mathtt{fx},t}^2 ={6}/({(R^2-1)4\pi^2\gamma_t})$
is the normalized angular frequencys (NAFs) in horizontal direction of the receive array \cite{mandelli2022sampling}. The NAF $\ell_t$ is calculated by $\ell_t = ({\Delta c\sin({\theta}_t)})/({\lambda\cos({\phi}_t)})$.
% The horizontal NAF $\ell_t$ and vertical NAF $\zeta_t$ are respectively computed by $\ell_t = ({\Delta c\sin({\theta}_t)})/({\lambda\cos({\phi}_t)})$, and $\zeta_t = \Delta r/\lambda \sin({\phi}_t)$. 
Note that $\sigma_{v,t}$ does not depend on the number of allocated subcarriers and we focus on the effect of AGV's position uncertainty to calculate the control signal.
% \vspace{-5pt}
\subsubsection{Communication Model}
% \phuc{cho nay can xem lai ky luong cung voi cai ben tren de dam bao la ham cua khoang cach r, vi luc nay r chua biet}
% The SNR of the receiver at QI $t$ is derived as
% \begin{IEEEeqnarray}{lll} 
%     \gamma_{\com, t} = \frac{MP^\T G^\T c^2}{(4\pi r_t f_c)^2N_0F\Delta f }.
%     % R_t = n_{\com, t}\Delta f \log_2\left(1+ \frac{MP^\T G^\T c^2}{(4\pi r_tf_c)^2N_0F\Delta f }\right).
% \end{IEEEeqnarray}

With multiple TX antennas, the received signal at AGVfrom transmitted waveform $\boldsymbol{\omega}_t \in\mathbb{C}^{RC}$ is $ y_t = \bh_t^H\boldsymbol{\omega}_t + w_t$, 
% \begin{IEEEeqnarray}{lll}
%     y_t = \bh_t^H\boldsymbol{\omega}_t + w_t, 
% \end{IEEEeqnarray}
where $w_t\sim\Ncal(0, \sigma_{w,t}^2)$ is the AWGN with $\sigma_{w,t}^2 = (N_0F)(n_{\com,t}\Delta f)$. The frequency flat fading channel between AP and the user is $\bh_t=\beta^{1/2}_t\bar{\bh}_t \in\mathbb{C}^{RC}$, where $\beta_t = \frac{n_{\com,t} G^\T c^2}{(4\pi r_t f_c)^2}$ is the large-scale fading and $\bar{\bh}_t$ the small-scale fading. 
We assume it is independent between communication and sensing process, leading to the channel is first estimated via uplink training, which is then used for the downlink transmission. Let $\tau^\com_\p$ and $\tau^\com_\tot$ be the length of  the pilot sequences and coherence interval in samples, respectively, with $\tau^\com_\p<\tau^\com_\tot $.  When AGV send their pilot sequence $\sqrt{\tau^\com_\p}\bvarpi\in\mathbb{C}^{\tau^\com_\p}$, the received pilot signal at the AP is given by $ \bY_t =  \sqrt{\tau^\com_\p p_\p}\bh_t\bvarpi^H + \bN_{\p,t}$, 
% \begin{IEEEeqnarray}{lll}
%     \bY_t =  \sqrt{\tau^\com_\p p_\p}\bh_t\bvarpi^H + \bN_{\p,t},
% \end{IEEEeqnarray}
where $\bN_{\p,t}\in\mathbb{C}^{RC \times \tau_\p^\com}$ is the AP noise, whose entries are independent and identically distributed (i.i.d.) $\Ncal(0,\sigma^2)$, and $p_\p$ is the average power of the training symbols. To estimate $\bh_t$, we project $\bY_t$ onto $\bvarpi$ as $\by_t = \bY_t\bvarpi = \sqrt{\tau^\com_\p p_\p}\bh_t + \bn_{\p,t},$ 
% \begin{IEEEeqnarray}{lll}
%     \by_t = \bY_t\bvarpi = \sqrt{\tau^\com_\p p_\p}\bh_t + \bn_{\p,t},
% \end{IEEEeqnarray}
where $\bn_{\p,t} = \bN_{\p,t}\bvarpi$ is additive noise at the AP section, which is Gaussian $\bn_{\p,t} \sim\Ncal(\mathbf{0}, \sigma^2_\p\mathbf{I}_{RC})$. At a result, the MMSE channel estimate is $\hat{\bh}_t = ({ \sqrt{\tau^\com_\p p_\p}\beta_t\bY_t\boldsymbol{\varpi}})/({ {\tau^\com_\p p_\p}\beta_t+\sigma_\p^2})$. 
% \begin{IEEEeqnarray}{lll}
%     \hat{\bh}_t = \frac{ \sqrt{\tau^\com_\p p_\p}\beta_t\bY_t\varpi}{ {\tau^\com_\p p_\p}\beta_t+\sigma_\p^2}.
% \end{IEEEeqnarray}
In another word, $\hat{\bh}_t$ follows $\hat{\bh}_t \sim\Ncal(\mathbf{0}, \sigma^2_{\hat{h},t}\bI_{RC})$, where $\sigma^2_{\hat{h},t}=\tau^\com_\p p_\p\beta_t^2/(\tau^\com_\p p_\p\beta_t + \sigma_\p^2)$. It is noted that the channel estimation error $\mathbf{e}_t = \bh_t-\hat{\bh}_t$ follows $\mathbf{e}_t \sim\Ncal(\mathbf{0}, \epsilon^2_t\bI_{RC})$ whose $\epsilon_t^2 = \beta_t-\sigma^2_{\hat{h},t}$. 
The achievable SNR at the AGV is $\gamma_{\com, t} = ({P^\T\sigma_{\hat{h},t}^2})/({P^\T \epsilon_t^2+\sigma_{w,t}^2})$. 
% \begin{IEEEeqnarray}{lll}
%     \gamma_{\com, t} = \frac{P^\T\sigma_{\hat{h},t}^2}{P^\T \epsilon_t^2+\sigma_{w,t}^2}.
% \end{IEEEeqnarray}
We approximate the achievable communication rate by $R_t(n_{\com,t}) = \bar{\tau}_t n_{\com,t}\Delta f\log_2(1+\gamma_{\com,t})$ with $\bar{\tau}_t = ({\tau_{\tot, t} - \tau_{\p,t}})/{\tau_{\tot,t}}$.

\vspace{-5pt}
\subsection{Problem Formulation}
% The DT model's objective is to uphold a precise estimate of $\PA$'s state  and offer the optimal sequence of actions to be executed in the physical realm based on its beliefs about states. 
% A fundamental distinction within our system lies in its thorough consideration and evaluation of real-world environments, which often involve state observations that are characterized by noise or corresponded costs. 
% Herein, the predicted estimator $\hat{\bs}_t$ of ${\bs}_t$ is modeled with 
% %	\begin{align}
%     %		p(\bs_t)\sim \mathcal{N}(\hat{\bs}_t, \bPsi_t), n \in\Ncal.
%     %	\end{align}
% $p(\bs_t)\sim \mathcal{N}(\hat{\bs}_t, \bTheta_t), n \in\Ncal$.  
% The MSE of the estimator is $ \text{MSE}_{} = \mathbb{E}\big[||\bs_t-\hat{\bs}_t||^2_2\big], t\in\Tcal$. 
% % \begin{align}
% %     \text{MSE}_{} = \mathbb{E}\big[||\bs_t-\hat{\bs}_t||^2_2\big], t\in\Tcal. 
% % \end{align}
% Our goal is to optimize AGV control signals and resource allocation for sensing and communication. 
Our goal is to maintain optimal control signals by implementing a control policy that meets the AGV's maximum allowable position variance and optimizes resource allocation for sensing and communication.
The maximal acceptable variance for AGV's position is defined as
% The maximal acceptable variance for AGV's position is
\begin{equation}\label{qos_condition}
    \sigma_{x,t}^2\leq {\xi}^2, \forall t\in \Tcal.
    % \sqrt{[\bTheta_t]_k} \leq {\xi}_k, \forall k \in\Kcal,
\end{equation}
% where $[\bTheta_t]_{k}$ is the $k$-th element of the diagonal of $\bTheta_t$.
$V^\pi(\bs_0)$ is defined as the value function of controlling AGV in \eqref{dynamic_model} under the control policy $\pi$, which is detailed in the next section. Our primary objective is to jointly minimize the total power consumption while ensuring the delivery of optimal control signals. Here $h\big(\{n_{\s,t}\}, \{n_{\com,t}\}, \{a_t\}\big)  = \Big[V^\pi(\hat{\bs}_0), -\sum_{t=0}^{\infty}(n_{\s,t} + n_{\com, t})\Big]^T,$ 
% \begin{IEEEeqnarray}{ll}\label{glob_objective}
% 	h\big(\{n_{\s,t}\}, \{n_{\com,t}\}, \{a_t\}\big)  = \Bigg[V^\pi(\hat{\bs}_0), -\sum_{t=0}^{\infty}(n_{\s,t} + n_{\com, t})\Bigg]^T,
% \end{IEEEeqnarray}
is a multi-objective function, encompassing the two performance metrics. We formulate the  optimization problem  as a joint design of control and scheduling $\{n_{\s,t}, n_{\com,t}\}$  while maintaining the confidence of DT's system estimate as follows
\begin{subequations} \label{glob_problem}
    \begin{alignat}{2}
	\underset{\{n_{\s, t}\},\{n_{\com, t}\}, \{a_t\}}  { \mathrm{maximize}} \ & h\big(\{n_{\s,t}\}, \{n_{\com,t}\}, \{a_t\}\big) \label{glob_problema}\\
	\mathrm{s.t.} \quad  & \mathbf{s}_t = f(\mathbf{s}_{t-1}) +  \bB a_{t-1} + \mathbf{u}_t, \label{glob_problemb}\\
   % & \mathbb{P}[\tau _{t,m} >  \tau^\mathrm{max}] \leq \varepsilon, \forall m\in\Mcal, t\in\Tcal,\label{glob_problemc}\\
   & R_t(n_{\com,t}) \geq \bar{R}, \forall t\in\Tcal,\label{glob_problemc}\\
   & \sigma_{x,t}^2 \leq {\xi}^2, \forall t\in\Tcal, \label{glob_problemd}\\
   & n_{\s, t} + n_{\com,t} \leq N, \forall t \in \Tcal.\label{glob_probleme}
    \end{alignat}
\end{subequations}
In order to maintain the communication Quality-of-Service (QoS) at every QI $t$, the throughput threshold $\bar{R}$ in 
\eqref{glob_problemb} is employed. The constraint \eqref{glob_probleme} ensures that the total number of subcarriers used $n_{\s, t} + n_{\com,t}$ does not exceed the system capacity $N$.
It is noted that we examine a context in which the belief vector is controlled via ISAC supported by DT before being employed by RL to propose the optimal action as an outcome, thereby allowing the agent to make  precise decisions. 

In the following, we introduce proposed a heuristic solution, which incorporates a two-step approach to effectively address  \eqref{glob_problem}: $(i)$ an uncertainty control RL algorithm is utilized to devise control actions for the physical world, effectively managing state estimation errors; $(ii)$ Subcarrier optimization algorithm is applied to identify suitable $\{n_{\s,t}, n_{\com,t}\}$, guided by the requirements of the RL model and DT. 

\vspace{-10pt}
%%%%%%%%%%%%%%%%%%%%%%%%%%%%%%%%%%%%%%%%%%%%%%%%%%%%%%
\section{Uncertainty Control POMDP}
The control problem in \eqref{glob_problem} is treated as a Partially Observable Markov Decision Process (POMDP), an extension of the Markov Decision Process (MDP) that includes sets of observations and observation probabilities due to partial and potentially inaccurate information. 
% In particular, a POMDP is represented by the 7-tuple $ \langle \Scal, \Acal, \Ocal, \mathtt{P}, \mathtt{O}, r, \gamma \rangle$, where $ \Scal $ is a finite set of states, $ \Acal $ is a set of actions, and $ \Ocal $ is a set of observations. At QI $ t $, the agent performs action $\ba_t$, transitioning from state $ \bs_t $ to $ \bs_{t+1} $ with transition probability $ \Ptt = \mathbb{P}[\bs_{t+1}|\bs_{t},\ba_t]$. The observation $ \bo_{t+1} $ occurs with probability $ \Ott = \mathbb{P}[\bo_{t+1}|\bs_{t+1},\ba_t] $, and the agent receives a reward $ r(\bs_{t}, \ba_t, \bs_{t+1}) $ such that $ r(\bs_{t}, \ba_t, \bo_{t+1}) \leq r^\mathrm{max} $.
In particular, the agent maintains an estimate vector $\hat{\bs}_t$, which describes the probability of being in a state $\bs_t$.  The policy  $\pi(\boldsymbol{\hat{s}},\boldsymbol{a})$ of the agent specifies an action $\ba_t$ based on this estimate, represented as $\pi(\hat{\bs}, \ba)$. Given an initial belief $\hat{\bs}_0$, the expected future discounted reward for the policy $\pi(\hat{\bs}, \ba)$ is expressed as $V^\pi(\hat{\bs}_0) = \mathbb{E}\left[\sum_{t=0}^{\infty}\gamma_t r(\bs_{t}, \ba_t, \bs_{t+1}) \mid \hat{\bs}_0, \pi\right]$,
where $0 < \gamma_t < 1$ is the discount factor. At QI $t$, the estimated state vector is $\hat{\bs}_t = [\hat{x}_t, \hat{v}_t]^\top \in \mathbb{R}^2$, governed by the conditional probability distribution function $\hat{\bs}_t \sim p(\hat{\bs}_t \mid \bs_t; {\eta}_t)$, where $\eta_t\in \mathbb{R}^1$ is parameterized by the accuracy of AGV's position. We define $\eta_{t} = {1}/{\sigma_{x,t}^2}$. As ${\eta}_{t}$ increases, the confidence in $\hat{x}_{t}$ improves, enabling the RL agent to make more accurate decisions. However, achieving high reliability for $\hat{x}_{t}$ requires low measurement error, which increases corresponding $n_{\s,t}$ and processing costs. 
% Given $\bs_{t}$ and $\boldsymbol{\eta}_t$, $\hat{s}_{t,k}$ are assumed to be statistically independent, allowing us to factorize the probability distribution as
% $p(\hat{\bs}_t \mid \bs_t; \boldsymbol{\eta}_t) = \prod_{k \in \Kcal} p(\hat{s}_{t,k} \mid \bs_t; \boldsymbol{\eta}_t)$ in terms of their factorization.  
For training $\pi(\boldsymbol{\hat{s}},\boldsymbol{a})$, we employ  Proximal Policy Optimization (PPO) with the actor-critic structure at RL agents that involves dividing the model into two distinct components, thus harnessing the strengths of both value-based and policy-based methods \cite{NIPS1999_6449f44a}. 
% To address the joint design problem with the objective of optimizing actions while minimizing communication energy, we implement the Deep RL (DRL) approach within the DT cloud environment, where  the  action and reward are are defined as follows.
To tackle the joint design problem aimed at optimizing actions while minimizing communication energy, we employ the Deep RL approach within the DT cloud environment, where the action and reward are defined as follows.

\subsubsection{Action Space Reformulation}
The action vector in the RL agent, structured as $\ba_t  = [{a}_{t},\eta_{t}]$, comprises control signals $a_t$ that affect the environment and $\eta_{t}$ for accuracy in estimated position. This study addresses optimizing the RL agent's state estimation accuracy, allowing continuous selection of $\eta_{t}$. The framework aims to reveal the informational value of observations for the task. 

%	\vspace*{-0.1cm}
\subsubsection{Reward Function Reformulation}
It is essential for the RL agent to not only pursue the primary goal defined by the problem but also to develop the capability to control the acceptable accuracy level $\eta_t$. The goal-oriented reward $r_t$ is adjusted into an uncertainty-based reward $\tilde{r}_t = r_t + \kappa \eta_t$, where $\kappa \geq 0$ serves as a weighting factor. The agent's dual objective is to maximize the original reward while concurrently minimizing the cost associated with observations.

% $\tilde{r}_t = f(r_t, {\eta}_t),$ 
% wherein $f(\cdot)$ is a monotonically non-decreasing function of $r_t$ and ${\eta}_t$. 
% In scenarios where a direct cost function, denoted as $c_k(\cdot)$, exhibits an upward trend with the accuracy of the observation $o_{t,k}$, a suitable additive formulation can be employed. Specifically, the modified reward, $\tilde{r}_t$, can be expressed by $\tilde{r}_t = r_t + \kappa  c(\eta_{t})$
% % \begin{align}\label{reward_shaping}
% %     \tilde{r}_t = r_t + \kappa  \sum_{k=1}^K c_k(\eta_{t,k}).
% % \end{align}
% Here, $c(\eta_{t})$ represents a non-increasing function of $\eta_{t}$, and $\kappa \geq 0$ is a weighting parameter. The primary objective of the agent is two-fold: maximizing the original reward while simultaneously minimizing the cost associated with  observations.
\begin{algorithm}[t]
    \begin{algorithmic}[1]{\fontsize{8.5pt}{9pt}\selectfont
            \protect\caption{ Proposed solution to the problem \eqref{glob_problem}} % \eqref{probGlobal}}
        \label{scheduling_alg}
        \global\long\def\algorithmicrequire{\textbf{Input:}}
        \REQUIRE System capacity $N$, DT requirement $({\xi_t, \bar{R}})$
        \global\long\def\algorithmicrequire{\textbf{Output:}}
        \REQUIRE $(n_{\com,t}, n_{\s,t})$, $\R_{\com,t}$, and $\hat{\bs}_t$\\ %Allocated subcarriers $(n_{\com,t}, n_{\s,t})$, Achieve data rate $\R_{\com,t}$, Estimated state $\hat{\bs}_t$\\
        % \vspace{3pt}
         \textit{Control strategy:} 
        \STATE Choose optimal action $\ba_t$ in estimated state $\hat{\bs}_t$ according to trained policy $\pi^*$ \\
        % \vspace{3pt}
         \textit{Subcarrier allocation for ISAC strategy:}
        \STATE Compute required $n_{\s,t}$ as in \eqref{sensing_bound}
        \STATE Compute required $n_{\com,t}$ as in \eqref{com_bound}
        \IF{$n_{\com,t}>N$ }
        \STATE $n_{\com,t}^* = N, n_{\s,t}^* = 0$
        \ELSIF{$n_{\com,t}<N$ and $n_{\com,t}<N + n_{\s,t}>N$}
        \STATE $n_{\com,t}^* = n_{\com,t}, n_{\s,t}^* = N- n_{\com,t}^*$
        \ELSE 
        \STATE $n_{\com,t}^* = n_{\com,t}, n_{\s,t}^* = n_{\s,t}$
        \ENDIF\\
        % \vspace{3pt}
         % \textit{ISAC update:}
        \STATE Update $\hat{\bs}_t$
        \STATE Update throughput $R_{\com,t}$
    }
\end{algorithmic}
\end{algorithm}
\section{Subcarier Optimization For DT}
% \vspace{-10pt}
The subcarrier allocation for communication and sensing will be based on: $(i)$ the acceptable level of accuracy ${\eta}_t$ for the estimated state, as determined by the RL agent; $(ii)$ the accuracy requirements of the DT model as specified in \eqref{qos_condition}; and $(iii)$ the availability of wireless resources, determined by the system capacity and data rate requirements (2).

% \vspace{-10pt}
\subsubsection{Reformulation Problem }
We introduce arbitrary variables $\bar{\xi}_{t}^2$  indicates desired DT's error level of system's state at QI $t$.  Given the reliability for the DT in \eqref{qos_condition} and the requisite level of accuracy ${\eta}_t$ to uphold the precision of the RL model, the DT should meet the error constraints at QI $t$ as $\sigma_{x,t}^2 \leq \bar{\xi}_{t}^2 \triangleq \min\Big\{\xi^2, \frac{1}{\eta_{t}}\Big\}$. 
% \begin{equation}\label{glob_qos}
%     \sigma_{x,t}^2 \leq \bar{\xi}_{t}^2 \triangleq \min\Big\{\xi^2, \frac{1}{\eta_{t}}\Big\}, \forall t\in\Tcal.
% \end{equation}
At the QI $(t-1)$ we solve the optimization problem
\begin{subequations} \label{reformulate_problem}
    \begin{alignat}{2}
	\text{P1}: \underset{n_{\s, t}, \ n_{\com, t}}  { \mathrm{minimize}} \ & \alpha_1\max\left\{\sigma_{x,t}^2-\bar{\xi}^2,0\right\} +\alpha_2(n_{\s,t} + n_{\com,t})\label{} \nonumber\\
    & + \alpha_3 |R_t(n_{s,t})-\bar{R},0|\label{reformulate_problema}\\
	\mathrm{s.t.} \quad  &  n_{\s, t} + n_{\com,t} \leq N, \label{reformulate_problemb}
    \end{alignat}
\end{subequations}
wherein the non-negative parameter $\alpha_1, \alpha_2, \alpha_3\in[0,1]$ represents the relative weight to accuracy and energy efficiency within the underlying objective function. The objective \eqref{reformulate_problema} relaxes the constraints \eqref{glob_problemc} and \eqref{glob_problemd} due to its dependence on practical conditions, i.e., in situations where the error surpasses a certain threshold, even ultilizing all subcarrier capacity fails to guarantee the desired reliability $\bar{\xi}^2$.

\subsubsection{Communication subcarrier allocation}
The required $n_{\com,t}^*$ to satisfied the communication constraint $R_t(n_{\com,t})\geq\bar{R}$ is obtained in closed-form by definition as
\begin{IEEEeqnarray}{lll}\label{com_bound}
    n_{\com,t}^* \geq \frac{\bar{R}}{\bar{\tau}_t\Delta f\log_2(1+\gamma_{\com,t})}, \forall t\in \Tcal.
\end{IEEEeqnarray}
\begin{table}[!t]
\vspace{-6pt}
\caption{Simulation Parameters}
\resizebox{9cm}{!} 
{
    \begin{tabular}{ll|ll}
        \hline
        Parameter & Value & Parameter & Value \\
        \hline
        Carrier frequency ($f_c$) & 28 GHz & Subcarrier spacing ($\Delta f$) & 120 kHz \\
        Bandwidth ($B$) & 1600 MHz & Symbol duration ($T_0$) & 8.92 $\mu$s \\
        Number of subcarriers ($N$) & 512 & Number of symbols ($M$) & 128 \\
        Noise figure ($F$) & 8 dB & Transmit antenna gain ($G_T$) & 33 dB \\
        Receive antenna gain ($G_E$) & 3 dBi & Transmit power ($P_{T}$) & 25 dBm \\
        The reward weight ($\kappa$) & $5\times 10^{-6}$ & Antenna spacing ($\Delta r, \Delta c$) & 0.5$\lambda$, 0.5$\lambda$ \\
        Data rate threshold $\bar{R}$ & 1000 [Mbps] & ($N_\mathrm{Per}, M_\mathrm{Per}$) & ($6400, 5120$)\\
        \hline\vspace{-1cm}
    \end{tabular}
}
\label{table_parameters}
\end{table}
\subsubsection{Sensing subcarrier allocation}
We emphasize that the target position uncertainty can be calculated in closed form based on the range \eqref{eq:AccuracyRange} and angle \eqref{eq:AccuracyElevation} uncertainties, as established by the following lemma.
\begin{lemma}\label{lemma:Position}
From the distributions of range and elevation angle, specified as $r_t$ with mean $\bar{r}_t$ and variance $\sigma_{r,t}^2$, and $\theta_t \sim \mathcal{N}(\bar{\theta}_t, \sigma_{\theta,t}^2)$ respectively, the position of the AGV $x_t$, with its mean $\bar{x}_t$ and variance $\sigma_{x,t}^2$  follows
\begin{align} \label{eq:Position}
    \left\{\begin{aligned}
        \bar{x}_t &= \bar{r}_t \cos(\bar{\theta}_t) e^{-\sigma_{\theta,t}^2/2},\\
        \sigma_{x,t}^2 &= \Gamma_t + \sigma_{r,t}^2 \Upsilon_t,
    \end{aligned}\right.
\end{align}
where $\Gamma_t = \bar{r}_t^2\left(\frac{1}{2} + \frac{1}{2}\cos(2\bar{\theta}_t)e^{-2\sigma_{\theta,t}^2} - \left(\cos(\bar{\theta}_t)e^{-{\sigma}_{\theta,t}^2/2}\right)^2 \right)$, $\Upsilon_t = \frac{1}{2} + \frac{1}{2}\cos(2\bar{\theta}_t)e^{-2\sigma_{\theta,t}^2}$. 
% \begin{align}
%     \Gamma_t &= \bar{r}_t^2\left(\frac{1}{2} + \frac{1}{2}\cos(2\bar{\theta}_t)e^{-2\sigma_{\theta,t}^2} - \left(\cos(\bar{\theta}_t)e^{-{\sigma}_{\theta,t}^2/2}\right)^2 \right), \label{eq:Gamma} \\
%     \Upsilon_t &= \frac{1}{2} + \frac{1}{2}\cos(2\bar{\theta}_t)e^{-2\sigma_{\theta,t}^2}. \label{eq:Psi}
% \end{align}
\end{lemma}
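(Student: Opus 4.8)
The plan is to treat the AGV position $x_t = r_t \cos(\theta_t)$ as a function of the two independent random variables $r_t$ and $\theta_t$, and to compute its first two moments directly. The key observation is that $r_t$ and $\theta_t$ are assumed independent, so $\E[x_t] = \E[r_t]\,\E[\cos\theta_t]$ and $\E[x_t^2] = \E[r_t^2]\,\E[\cos^2\theta_t]$, which reduces everything to one-dimensional Gaussian expectations of trigonometric functions of $\theta_t \sim \mathcal{N}(\bar\theta_t, \sigma_{\theta,t}^2)$, plus the elementary facts $\E[r_t] = \bar r_t$ and $\E[r_t^2] = \bar r_t^2 + \sigma_{r,t}^2$.

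First I would recall the standard Gaussian characteristic-function identity: for $\theta \sim \mathcal{N}(\mu,\sigma^2)$ one has $\E[e^{j\omega\theta}] = e^{j\omega\mu - \omega^2\sigma^2/2}$, hence taking real parts, $\E[\cos\theta] = \cos(\mu)e^{-\sigma^2/2}$ and, via $\cos^2\theta = \tfrac12 + \tfrac12\cos(2\theta)$, $\E[\cos^2\theta] = \tfrac12 + \tfrac12\cos(2\mu)e^{-2\sigma^2}$. Applying the first identity with $\mu = \bar\theta_t$, $\sigma^2 = \sigma_{\theta,t}^2$ immediately gives $\bar x_t = \E[r_t]\E[\cos\theta_t] = \bar r_t\cos(\bar\theta_t)e^{-\sigma_{\theta,t}^2/2}$, which is the first line of \eqref{eq:Position}.

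Next I would compute the variance via $\sigma_{x,t}^2 = \E[x_t^2] - \bar x_t^2 = \E[r_t^2]\E[\cos^2\theta_t] - \bar x_t^2 = (\bar r_t^2 + \sigma_{r,t}^2)\bigl(\tfrac12 + \tfrac12\cos(2\bar\theta_t)e^{-2\sigma_{\theta,t}^2}\bigr) - \bar r_t^2\cos^2(\bar\theta_t)e^{-\sigma_{\theta,t}^2}$. The final step is purely algebraic regrouping: split the product $(\bar r_t^2 + \sigma_{r,t}^2)\Upsilon_t$ into $\bar r_t^2\Upsilon_t + \sigma_{r,t}^2\Upsilon_t$, then collect the two $\bar r_t^2$-terms to form $\Gamma_t = \bar r_t^2\bigl(\tfrac12 + \tfrac12\cos(2\bar\theta_t)e^{-2\sigma_{\theta,t}^2} - \cos^2(\bar\theta_t)e^{-\sigma_{\theta,t}^2}\bigr)$, leaving $\sigma_{x,t}^2 = \Gamma_t + \sigma_{r,t}^2\Upsilon_t$, matching the lemma (using $(\cos(\bar\theta_t)e^{-\sigma_{\theta,t}^2/2})^2 = \cos^2(\bar\theta_t)e^{-\sigma_{\theta,t}^2}$).

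I do not anticipate a serious obstacle here; the computation is routine once independence is invoked. The only subtle point worth stating explicitly is the independence assumption between $r_t$ and $\theta_t$ (which is what lets the expectation of the product factor), and implicitly that $r_t$ need only be assumed to have the stated mean and variance — no Gaussianity of $r_t$ is required, only of $\theta_t$. If one wanted to be careful, one could also remark that the formula is an exact moment computation rather than a linearization, so no small-angle or small-$\sigma$ approximation is being used; this is the main conceptual content distinguishing it from a naive Taylor-expansion propagation of uncertainty.
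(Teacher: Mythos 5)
Your proposal is correct and follows essentially the same route as the paper's Appendix A: both invoke independence of $r_t$ and $\theta_t$, use the Gaussian identity $\E[\cos(\omega\theta_t)]=\cos(\omega\bar\theta_t)e^{-\omega^2\sigma_{\theta,t}^2/2}$ for $\omega=1,2$, and then regroup terms into $\Gamma_t+\sigma_{r,t}^2\Upsilon_t$. The only cosmetic difference is that the paper starts from the product-variance decomposition $\sigma_{x,t}^2=\E[r_t]^2\mathrm{Var}[\cos\theta_t]+\mathrm{Var}[r_t]\E[\cos^2\theta_t]$ while you compute $\E[x_t^2]-\bar x_t^2$ directly; these are algebraically identical.
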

\begin{proof}
    Please see the Appendix A.
\end{proof}
The mean value and variance of the AGV position are obtained from Lemma~\ref{lemma:Position}. With $\sigma_{x,t}^2\leq\delta^2$, the sensing constraint can be formulated by $\Gamma_t + \sigma_{r,t}^2 \Upsilon_t \leq \bar{\xi}^2$, 
% \begin{IEEEeqnarray}{lrll}
%     &\Gamma_t + \sigma_{r,t}^2 \Upsilon_t \ &\leq \bar{\xi}^2 \\
%     \Leftrightarrow& \Gamma_t + \frac{6}{(n_{\s,t}^2-1)\gamma_{\s,t}} \frac{c^2}{(4\pi\Delta f)^2} \ &\leq \bar{\xi}^2,
% \end{IEEEeqnarray}
leading to the closed-form lower bound of $n_{\s,t}$ as
\begin{IEEEeqnarray}{lll}\label{sensing_bound}
    n_{\s,t} \ &\geq  \sqrt{\frac{6c^2\Upsilon_t}{(\bar{\xi}^2 -\Gamma_t)(4\pi\Delta f)^2\gamma_{\s,t}} + 1},
\end{IEEEeqnarray}
which can solveable by selecting the optimal $n_{\s,t} $ that satisfing both \eqref{sensing_bound} and the subcarrier capacity $N$. The proposed algorithm for solving \eqref{glob_problem} is outlined in Alg.~1. 
\begin{figure}[t]
    \centering
    \includegraphics[width = 0.5\textwidth]{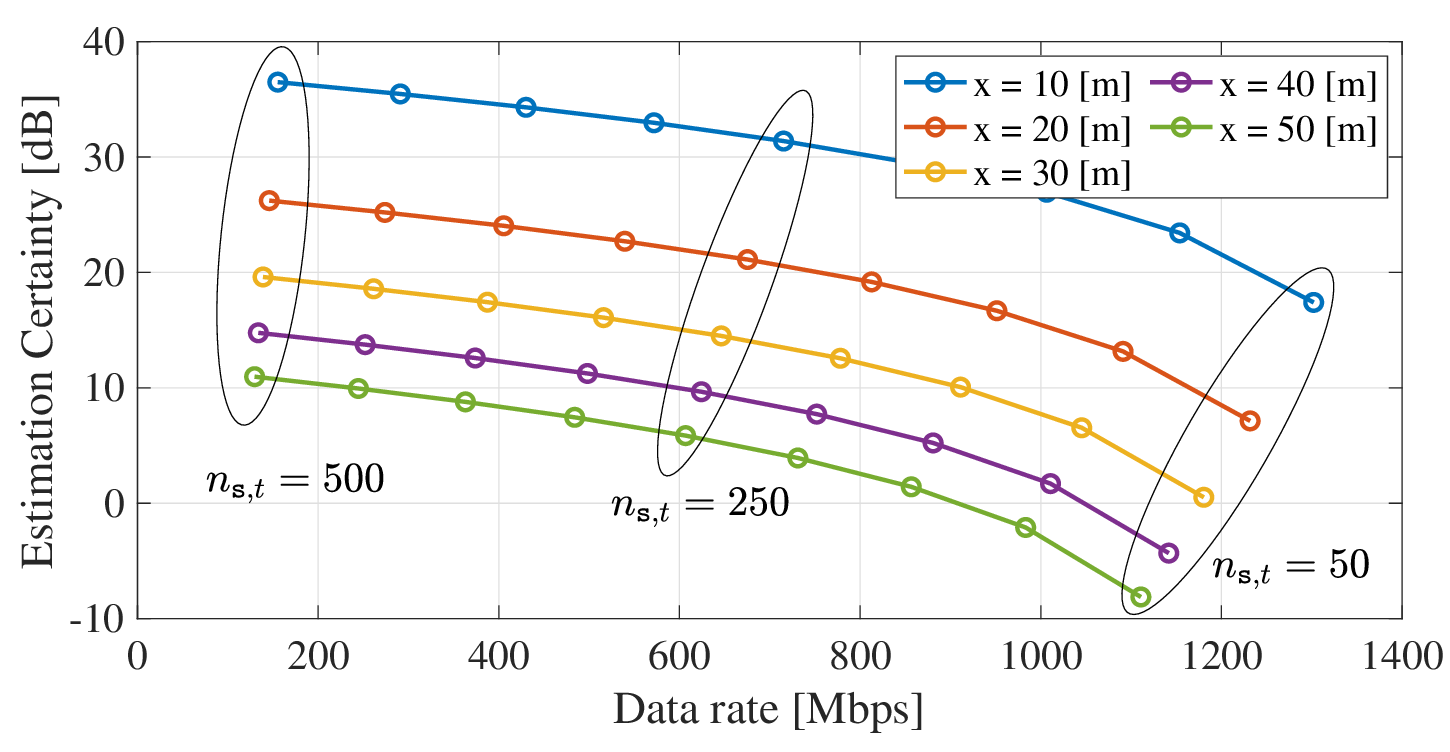}
    \captionsetup{format=plain, justification=justified, width=1\linewidth}
    \caption{\small{Trade-off analysis between sensing certainty and communication data rate at various distances.}}
    \label{fig:trade_off}
    \vspace{-20pt}
\end{figure}
% \begin{figure}[t]
%     \centering
%     \includegraphics[width = 0.5\textwidth]{figs/fig2_control_v2.eps}
%     \captionsetup{format=plain, justification=justified, width=1\linewidth}
%     \caption{\small{The snapshot of uncertainty evolution and no. offered sensing subcarriers in Proposed-SP scheme vs QI to meet both the uncertainty requirements of DT and control solution.}}
%     \label{fig:dynamic_evolution}
%     \vspace{-15pt}
% \end{figure}

% The proposed algorithm for solving \eqref{glob_problem} được trình bày steps trong algorithm 1. Cụ thể, tại mỗi QI, chúng tôi first perform optimal action dựa trên estimated state $\bs_{t-1}$. Giả định này hợp lý trong bối cảnh chúng tôi xem xét AGV moving within inside manufactory, với tốc độ tối đa thông thường 1.7m/s và each QI được set up vào 1ms. Điều này dẫn tới việc vị trí của AGV không thay đổi nhiều. 
% The proposed algorithm for solving \eqref{glob_problem} is outlined in Alg.~1. Specifically, at each QI, we first perform the optimal action based on the estimated state $\hat{\bs}_{t}$. This assumption is reasonable within the context of an AGV moving inside a manufacturing facility, typically at a maximum speed of 1.7 m/s, with each QI as 1 ms. Consequently, the position of the AGV does not change significantly within each interval.
%%%%%%%%%%%%%%%%%%%%%%%%%%%%%%%%%%%%%%%%%%%%%%%%
\section{Simulation results}
%%%%%%%%%%%%%%%%%%%%%%%%%%%%%%%%%%%%%%%%%%%%%%%%

In this section, we provide the numerical results to investigate our findings and proposed design. The important parameters are listed in Table \ref{table_parameters}. Unless otherwise noted, at each QI, the AGV is served by the nearest AP with a distance randomly ranging from 5 to 30 meters, and the DT requirement $\delta_t = 0.02\ \forall t$. In Fig.~\ref{fig:trade_off}, we plot the trade-off between sensing certainty, $1/\sigma^2_{x}$, and achievable data rate, $R_{\com,t}$, for different positions of the AGV. The number of subcarrier for communication with each value of $n_{\s,t}$ is $n_{\com,t} = N - n_{\s,t}$. It is evident that as more communication resources are allocated to sensing, the accuracy of the estimated position improves, while the AGV's throughput correspondingly decreases. Specifically, for a typical distance of $20 m$, if the AP allocates 250 subcarriers to sensing, the position estimation certainty is 7.5 dB, and the corresponding data rate is $600$ Mbps. Decreasing $n_{\s,t}$ to 50 results in an AGV throughput of approximately 1100 Mbps, while the position uncertainty of the estimate is -9 dB. Recognizing this trade-off is crucial for optimizing the system in various applications with differing sensing and communication requirements.
\begin{figure}[t]
    \centering
    \includegraphics[width = 0.5\textwidth]{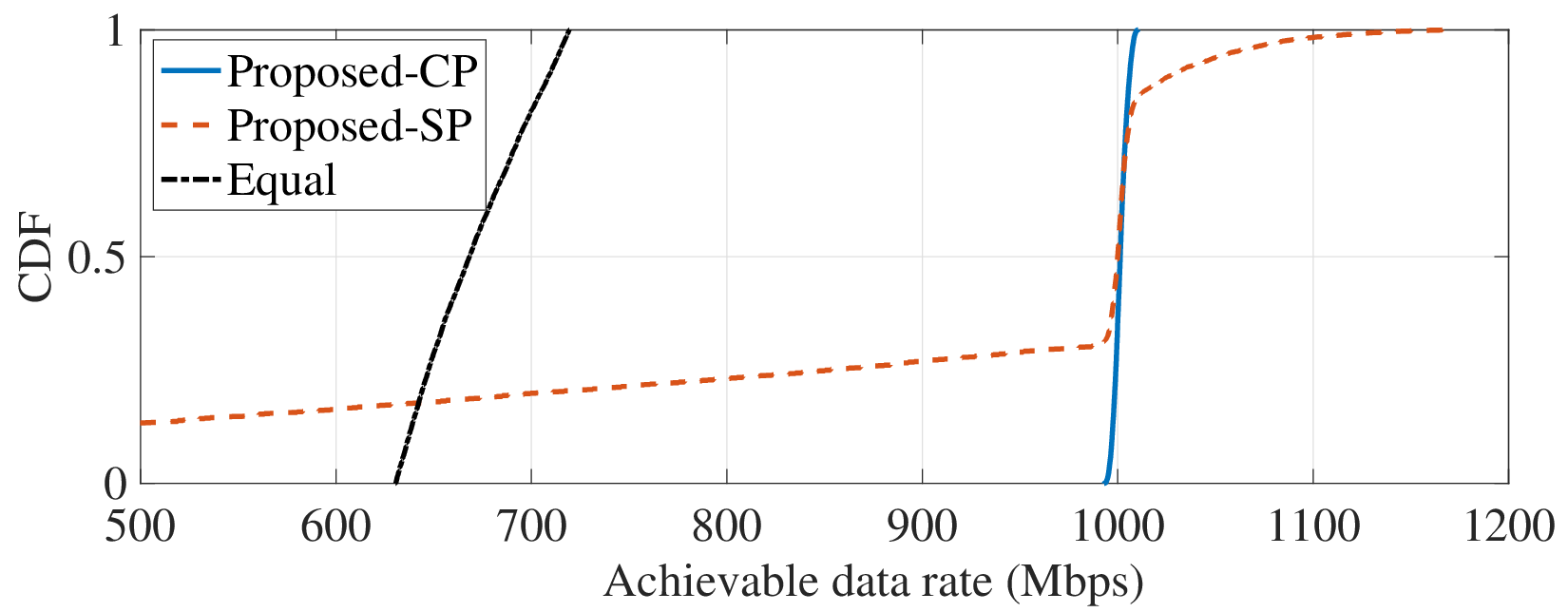}
    \caption{\small{The achievable data rate of different schemes.}}
    \label{fig:CDF_rate}
    \vspace{-15pt}
\end{figure}

% Fig.~\ref{fig:dynamic_evolution} presents a snapshot of the uncertainty evolution and the strategic allocation of subcarriers for sensing, based on the combined certainty $\bar{\delta}_t^2$ , which integrates  requirements from DT $\xi^2_t$ and RL agent $\eta_t$. The constraints on the RL agent's behavior are primarily governed by the reward function $\tilde{r}_t$. As a result, the DT demands higher position estimation accuracy only when the DT threshold is exceeded or when the AGV are on critical positions, necessitating precise force control. This strategy  conserves resources while enhances the system's ability to accurately capture and compute control values.

The CDF of the achievable data rate among different algorithms is presented in Fig.~\ref{fig:CDF_rate} and their corresponding number of QI to reach the target in Fig.~\ref{fig:CDF_QI}. In these figure,  \textit{Proposed-CP} (Proposed Communication Priority), depicted in Alg.~1,  prioritizes allocating $n_{\com,t}$ to meet $\bar{R}$ before allocating for $n_{\s,t}$. \textit{Proposed-SP} (Proposed Sensing Priority), also follows Alg.~1 but prioritizing subcarriers for sensing the AGV's position before allocating them for communication. The \textit{Equal} method allocates subcarriers equally for sensing and communication as $N/2$ at all times. Compared to the baseline, our schemes demonstrate superior performance in serving the AGV communication needs.  It is noted that although the Proposed-CP achieves better communication performance, with $R_{\com,t}$ meeting $\bar{R}$ most of the time compared to \textit{Proposed-SP}, which fails to meet $\bar{R}$ nearly $30\%$ of the time, the AGV takes longer to reach the target due to sensing requirements not meeting the RL request as in Fig.~\ref{fig:CDF_QI}.

% \vspace*{-10pt}
%%%%%%%%%%%%%%%%%%%%%%%%%%%%%%%%%%%%%%%%%%%%%%%%
\section{Conclusions}
%%%%%%%%%%%%%%%%%%%%%%%%%%%%%%%%%%%%%%%%%%%%%%%%
% \vspace*{-10pt}
This paper analyzes the impact of ISAC design within a DT framework for optimal system monitoring and control. By calculating the number of subcarriers required for sensing object positions and meeting communication requirements, we can easily propose options for efficient and economical wireless resource allocation. Our future work aims to extend the DT using ISAC in complex and practical environments with multiple AGVs moving and computing in indoor settings. %Additionally, we will address the context of multiple APs causing mutual interference, requiring resource allocation to meet positioning and communication demands.
\vspace*{-15pt}
\section*{Appendix A}
Lets $x_t$ follows a distribution with mean value as $\bar{x}_t$ and variance  $\sigma_{x,t}^2)$.
To find $\bar{x}_t$, we do have 
\begin{IEEEeqnarray}{ll}
    \bar{x}_t = \mathbb{E}[r_t\cos(\theta_t)] = \mathbb{E}[r_t]\ \mathbb{E}[\cos(\theta_t)] = R_0 \cos(\bar{\theta}_t) e^{-\sigma_{\theta,t}^2/2}. \nonumber
\end{IEEEeqnarray}
Next, to compute the variance $\sigma_{x,t}^2$, we have the fact that
\begin{IEEEeqnarray}{lll}\label{eq:var_x}
    \sigma_{x,t}^2 &= \mathbb{E}[r_t]^2\mbox{Var}[\cos(\theta_t)] + \mbox{Var}[r_t]\mathbb{E}[\cos^2(\theta_t)] \nonumber \\
    & = \bar{r}_t^2 \mbox{Var}[\cos(\theta_t)]  + \sigma_{r,t}^2\mathbb{E}[\cos^2(\theta_t)].
\end{IEEEeqnarray}
In \eqref{eq:var_x}, the goal is to convert $\mathbb{E}[\cos^2(\theta_t)]$ and $\mbox{Var}[\cos(\theta_t)]$ to tractable forms.  First, $\E[\cos^2(\theta_t]$ can be equivalently reformulated as $\E[\cos^2(\theta_t]  = \E\left[0.5({1+\cos(2\theta_t)})\right]
     = 0.5 + 0.5\E[\cos(2\theta_t)]
     = 0.5+ 0.5\cos(2\bar{\theta}_t)e^{-2\sigma^2_{\theta,t}}$. 
Using this result, we can rewrite $\mbox{Var}[\cos(\theta_t)]$ to tractable manner as $\mbox{Var}[\cos(\theta)] = \E[\cos^2(\theta)] - (\E[\cos(\theta)])^2  = \frac{1}{2} + \frac{1}{2}\cos(2\theta_0)e^{-2\sigma^2_\theta} - \left(\cos(\theta_0)e^{-\sigma_\theta^2/2}\right)^2.$
Summing up, the $\sigma_{x,t}^2$ is given by
\begin{IEEEeqnarray}{lll}
    \sigma_{x,t}^2 =& \bar{r}_t^2\left(\frac{1}{2} + \frac{1}{2}\cos(2\bar{\theta}_t)e^{-2\sigma^2_{\theta,t}} - \left(\cos(\bar{\theta}_t)e^{-\sigma_{\theta,t}^2/2}\right)^2 \right) \nonumber\\
    &+ \sigma_{r,t}^2 \left(\frac{1}{2} + \frac{1}{2}\cos(2\bar{\theta}_t)e^{-2\sigma^2_{\theta,t}}\right).
\end{IEEEeqnarray}
By denoting $\Gamma_t$ and $\Upsilon_t$ as in \eqref{eq:Position}, we obtain $\sigma_{x,t}^2$ as in \eqref{eq:Position} and complete the proof.

\begin{figure}[t]
    \centering
    \includegraphics[width = 0.5\textwidth]{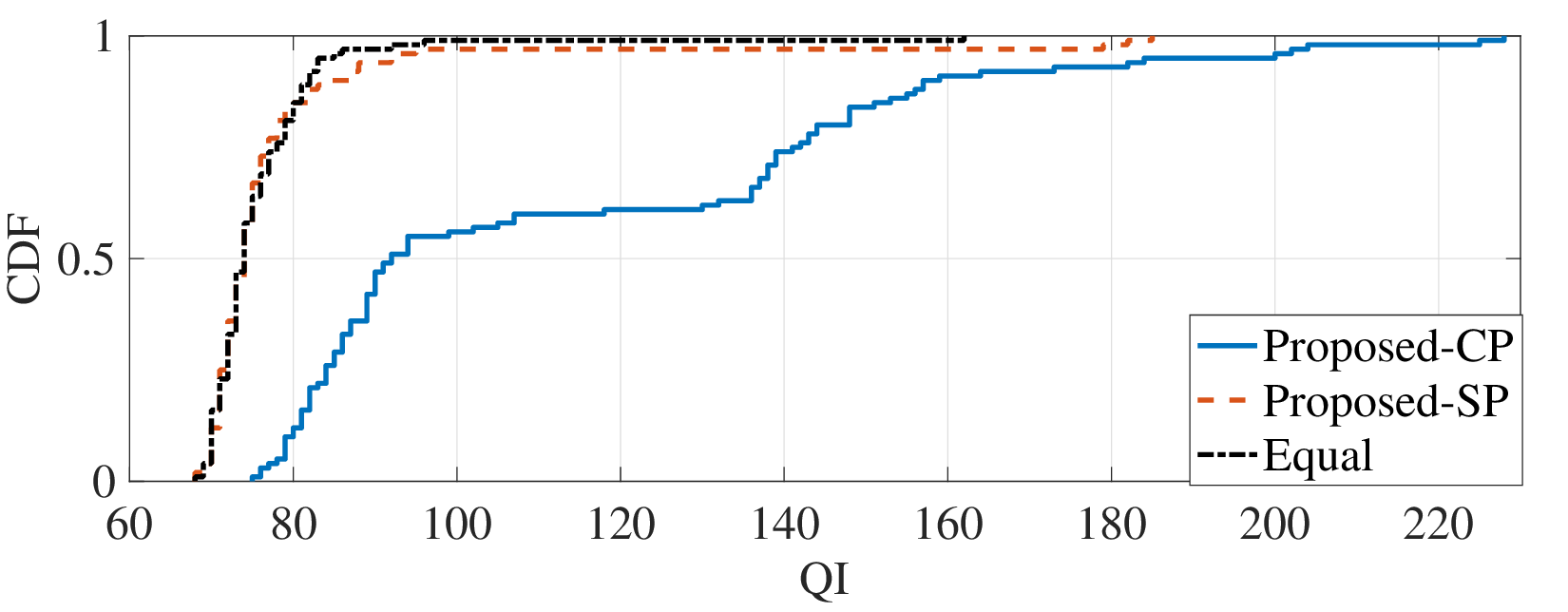}
    \caption{\small{Number of QI for AGV to reach the target.}}
    \label{fig:CDF_QI}
    \vspace{-15pt}
\end{figure}

\setstretch{0.95}
% \vspace*{-10pt}
\bibliographystyle{IEEEtran}
\vspace*{-10pt}
% \balance
\bibliography{Journal}
%\vspace{-0.2cm}
\end{document}